\definecolor{light-blue}{rgb}{0.3,0.5,0.8}
\title{\bf Safe and Efficient Coexistence of Autonomous Vehicles with Human-Driven Traffic at Signalized
Intersections}
\author{Filippos N. Tzortzoglou, \IEEEmembership{Student Member, IEEE,} Logan E. Beaver, \IEEEmembership{Member, IEEE} \\ and Andreas A. Malikopoulos,
\IEEEmembership{Senior Member, IEEE}
\thanks{This research was supported in part by NSF under Grants CNS-2401007, CMMI-2348381, IIS-2415478, and in part by MathWorks.} \thanks{Filippos N. Tzortzoglou, and Andreas A. Malikopoulos are with the Department of Civil and Environmental Engineering, Cornell University, Ithaca, NY 14853 USA.   (emails: \tt\small{ft253@cornell.edu; amaliko@cornell.edu)}}\thanks{Logan E. Beaver is with the Department of Mechanical and Aerospace Engineering, Old Dominion University, Norfolk, VA 23529  (email:\tt\small{ lbeaver@odu.edu)}}}
\newtheorem{theorem}{Theorem}
\newtheorem{remark}{Remark}
\begin{document}

  \maketitle

 \thispagestyle{empty}
 \pagestyle{empty}
 
\begin{abstract}

The proliferation of connected and automated vehicles (CAVs) has positioned mixed traffic environments, which encompass both CAVs and human-driven vehicles (HDVs), as critical components of emerging mobility systems. Signalized intersections are paramount for optimizing transportation efficiency and enhancing fuel economy, as they inherently induce stop-and-go traffic dynamics. In this paper, we present an integrated framework that concurrently optimizes signal timing and CAV trajectories at signalized intersections, with the dual objectives of maximizing traffic throughput and minimizing energy consumption for CAVs. We first formulate an optimal control strategy for CAVs that prioritizes trajectory planning to circumvent state constraints, while incorporating the impact of signal timing and HDV behavior. Furthermore, we introduce a traffic signal control methodology that dynamically adjusts signal phases based on vehicular density per lane, while mitigating disruption for CAVs scheduled to traverse the intersection. Acknowledging the system’s inherent dynamism, we also explore event-triggered replanning mechanisms that enable CAVs to iteratively refine their planned trajectories in response to the emergence of more efficient routing options. The efficacy of our proposed framework is evaluated through comprehensive simulations conducted in MATLAB.
\end{abstract}

\section{INTRODUCTION}
Connected and automated vehicles (CAVs) have attracted considerable attention within the transportation sector \cite{GUANETTI201818}. However, as reported in \cite{alessandrini2015automated}, society cannot expect 100$\%$ penetration of CAVs before 2060. For this reason, researchers are actively exploring the challenging problem of managing mixed traffic environments that include both CAVs and human-driven vehicles (HDVs) \cite{li2023survey}. 

A significant challenge related to mixed traffic is the development of control strategies for the operation of CAVs alongside HDVs at intersections. It has been shown in \cite{VARAIYA2013177} that dynamically selecting the best timing plan for traffic signals based on the current traffic flow can alleviate congestion at intersections, leading to shorter travel times and fewer stop-and-go events, which directly influence energy consumption and emissions. On the other hand, controlling the acceleration of CAVs can reduce emissions, enhance passenger comfort, improve traffic stability and throughput, and ensure safety \cite{hellstrom2010design}. However, most existing literature focuses on adjusting either traffic light phases or vehicle speeds while only a few studies consider the simultaneous control of CAVs and traffic lights.  Next, we review the relevant literature.
\subsection{Literature Review}
\subsubsection{Control of traffic signals}
The dynamic control of traffic signals has been widely studied in the literature. One early attempt along these lines is the \textit{Sydney coordinated adaptive traffic system} \cite{sims1980sydney}, which selects from pre-defined traffic phases to maximize throughput. Similarly, the authors in \cite{MIRCHANDANI2001415} introduced a traffic-adaptive approach that partitions the traffic signal control into subproblems, predicting flows and solving separate optimization problems to adjust signals. Later, in \cite{kulkarni2007} a fuzzy controller 
 was proposed that extends or ends a green phase based on approaching vehicles and queues. Then, a seminal framework called \textit{MaxPressure} was proposed in \cite{VARAIYA2013177} that improves network throughput using local data at signalized intersections. Sequel studies were inspired by \cite{VARAIYA2013177} and extend this approach using reinforcement learning \cite{wei2019presslight}. A broader survey is provided in~\cite{eom2020traffic}.

\subsubsection{Control of CAVs}
Control of CAVs at intersections has also received significant attention \cite{hellstrom2010design,wang2023co}. A central approach is the \textit{green light optimal speed advisory} (GLOSA) system\cite{seredynski2013comparison}, which adjusts CAVs' speed according to signal phases. Field experiments have demonstrated GLOSA’s potential \cite{stahlmann2018exploring}, while recent research also incorporated stochastic switching times for the traffic signals \cite{TYPALDOS2023104364}. Recently, in addition to GLOSA systems, an optimal control approach was presented in \cite{meng2020eco} to define CAVs' trajectories that cross intersections without stopping. A comprehensive review of CAVs control at signalized intersections is presented in \cite{wang2023co}.

\subsubsection{Joint control of traffic signals and CAVs}
Recent efforts also address the joint control of both signals and CAVs. One main category in the literature utilizes bi-level optimization, determining signal phases first and then defining CAV trajectories, based on the traffic signals \cite{du2021coupled}. Another category jointly optimizes signal timing and CAVs, under the same optimization problem, however, real-time execution remains challenging with high traffic \cite{le2024distributed}. Reinforcement learning approaches have also been explored to optimize signal timing and CAV control \cite{guo2023cotv}.

% \subsubsection{Joint control of traffic signals and CAVs}
% In the last decade,  researchers have also considered the joint control of CAVs and traffic signals. Under this joint control framework, there are two main categories. The first one considers bi-level approachs where the traffic lights phases are determined at a higher level, and based on these traffic phases, the CAVs define their trajectories; see \cite{kamal2019development,du2021coupled}. The second category considers joint optimization where signal-phases and CAVs are optimized under the same optimization problem; see \cite{yu2018integrated,le2024distributed}. However, in all the studies of the second category, achieving real-time implementation posed a significant challenge, particularly as the number of vehicles within the intersection increased. Finally some other approaches have used reinforcement learning to determine the signal-time phasing and the control inputs for the CAVs; see \cite{guo2023cotv}. 

\subsubsection{Contributions of this paper} Although recent studies have explored the joint control of CAVs and traffic signals at signalized intersections, to our knowledge, no prior work has approached this problem using optimal control for CAVs with adaptive signal timing while accounting for future HDVs behavior. This is particularly challenging, as CAVs must account in real time for both signal phase changes and HDVs' trajectories. While some research efforts have employed optimal control, they often rely on simplifying assumptions. For instance, \cite{meng2020eco} assumes fixed signal timings and disregards future HDVs' trajectories, switching to a car-following model when conflicts arise. Similarly, \cite{feng2018spatiotemporal} considers only CAVs, neglecting HDVs altogether. In contrast, we propose an optimal control framework that incorporates future HDVs behavior and dynamic signal timing. We further introduce event-triggered replanning conditions so CAVs adapt efficiently to intersection states. Finally we validate our approach through simulations in MATLAB. 

The remainder of the paper is organized as follows. Section \ref{modeling framework} outlines the modeling framework. Section \ref{CAVs control} details the CAV control strategy under traffic light and safety constraints. Section \ref{Human_Driven_vehicles_section} introduces the human driving model, and Section \ref{traffic lights control} describes the traffic light control approach. Simulation results are presented in Section \ref{simulations results}, followed by conclusions and future directions in Section \ref{conlcuding remarks}.

% \subsubsection{Joint Control of Traffic Signals and CAV acceleration}
% For the joint control of both CAVs and traffic signals at signalized intersections, there are not plenty of well-developed and rigorous approaches. An initial attempt was recently developed in 2018 developed by Yu et al \cite{yu2018integrated}. In this research effort the authors used a mixed-integer linear programming approach to simultaneously optimize vehicle trajectories and traffic signals at isolated intersections. They used a two-level model for traffic light controllers and CAVs using the enumeration and pseudo-spectral method \cite{xu2018cooperative}. Especially, the first level focuses on coordinating CAVs and traffic lights while the second level aims at minimizing fuel consumption for the CAVs. A similar approach was presented in \cite{tajalli2021traffic}. Recently, Wang \cite{wang2023heterogeneous} presented an approach which utilizes a two-layer optimization model for signal duration and CAV trajectory planning, In the upper level, they define the duration of the phases of the traffic lights while in the lower level they control the CAVs based on the defined phases of the traffic lights. However, in their formulation they use dedicated lanes and accurate prediction of the HDV behaviour that can be considered as strong assumptions. Also they used a cuckoo algorithm that can be considered real-time challenging to be implemented. Also approaches that implement Reinforcement Learning have been presented in \cite{li2024managing} and \cite{guo2023cotv}.  

\vspace{0pt}
\section{Modeling Framework} \label{modeling framework}
\vspace{0pt}
\begin{figure*}
    \centering
    \begin{subfigure}[b]{0.72\textwidth}
        \includegraphics[width=\textwidth]{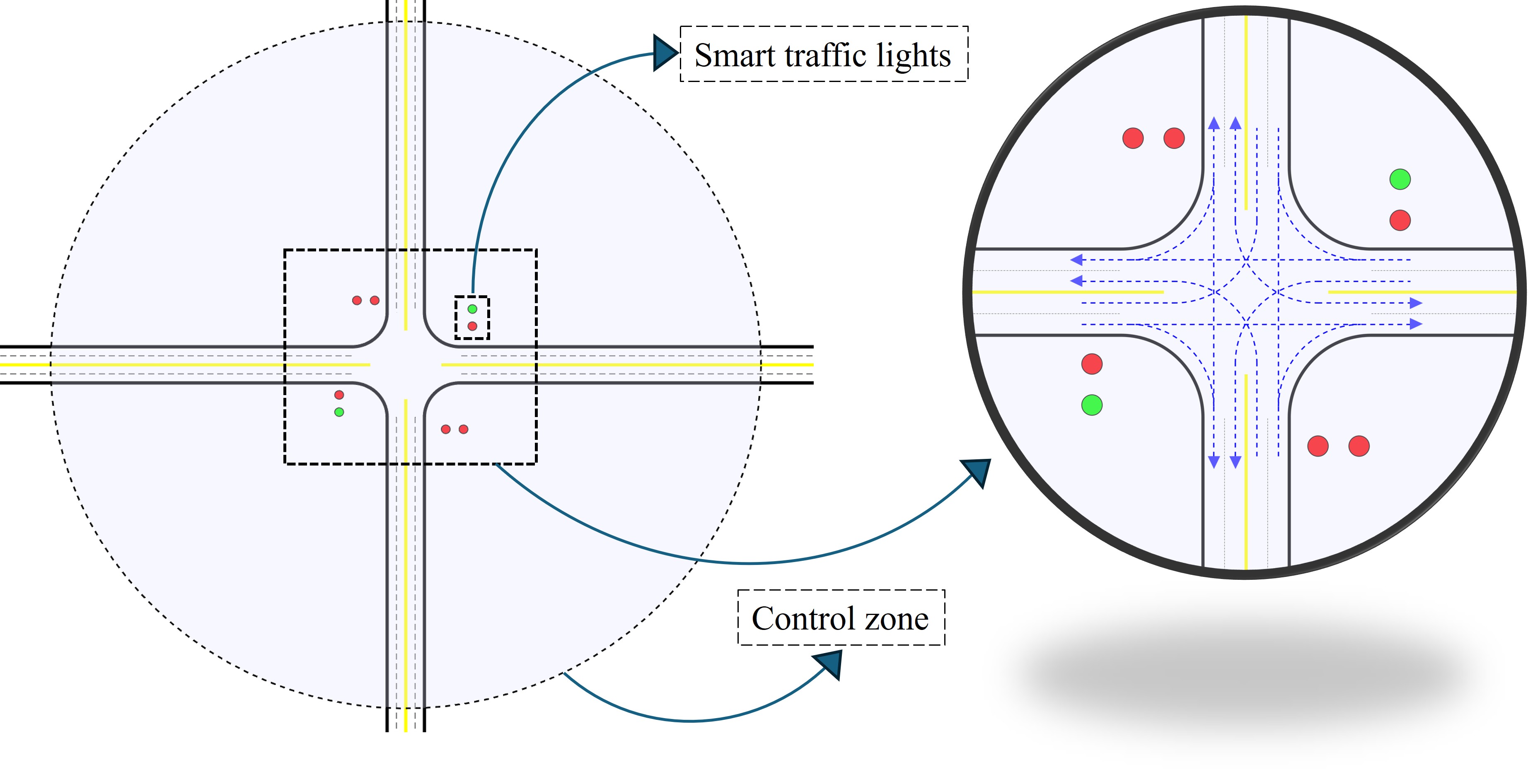}
        \caption{Illustration of a signalized intersection along with the control zone and smart traffic lights.}
        \label{fig:Intersection}
    \end{subfigure}%
    \hspace{5mm} % Adjust horizontal spacing manually
    % Create a parbox for vertically stacking 1b and 1c
        \begin{subfigure}[b]{0.24\textwidth}
        \vspace{15pt}
        \includegraphics[width=\textwidth]{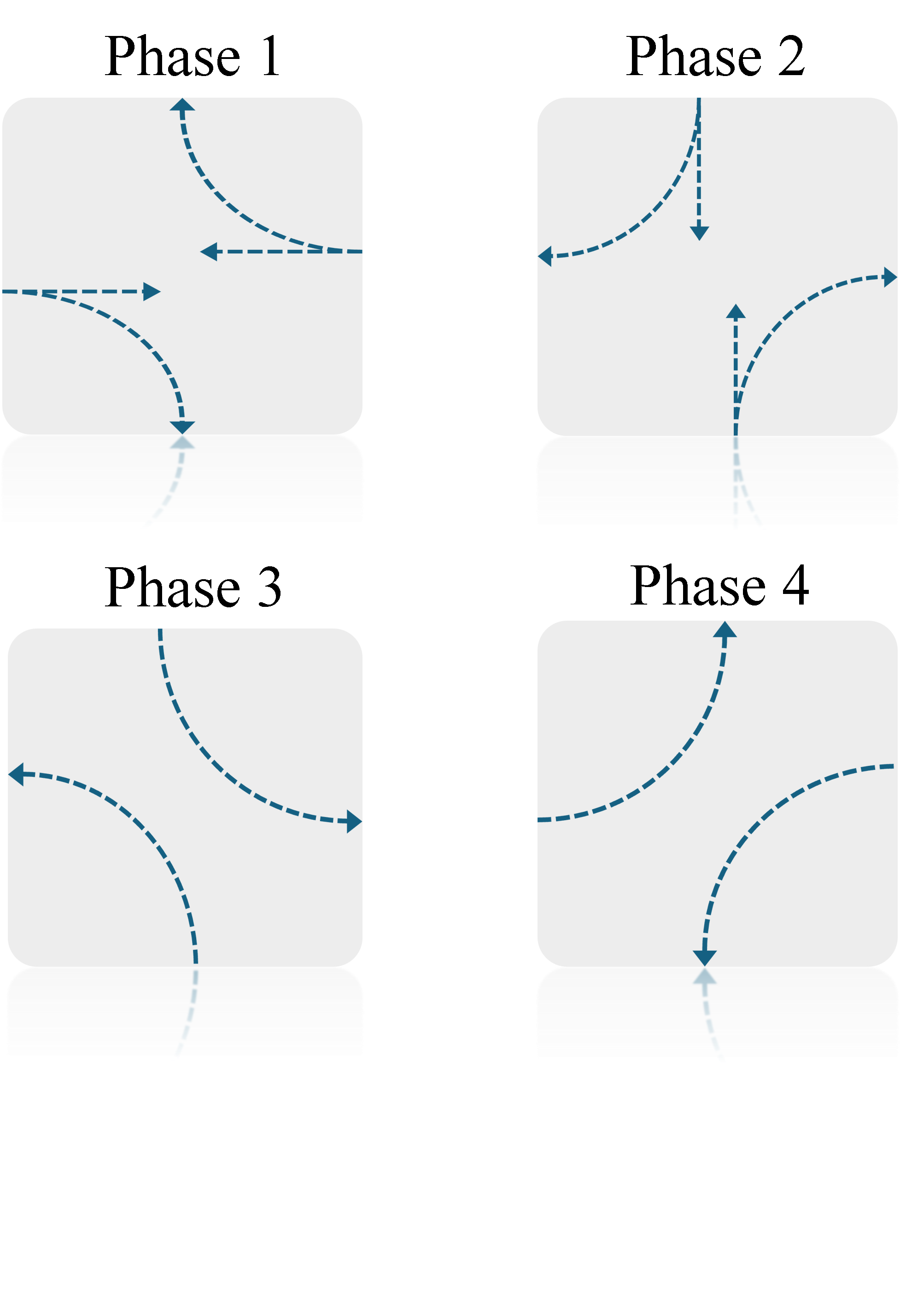}
        \caption{Four traffic phases guaranteeing no lateral collision. }
        \label{fig:phases}
    \end{subfigure}%
    \caption{Illustration of intersection set-up and traffic signal phases.}
      \vspace{-14pt}
    \end{figure*}

We analyze an intersection as shown in Fig. \ref{fig:Intersection} since it consists of one of the most challenging traffic scenarios. Let the set of paths in the intersection be denoted as $\mathcal{G}=\{1,\dots,|\mathcal{G}|\}$; see the blue arrows in Fig. \ref{fig:Intersection}.  We also define as \textit{control zone} the area within CAVs can exchange information with each other and with the \textit{smart traffic lights}. The smart traffic lights are capable of storing information related to the vehicles and adjusting their phase duration based on this information as we will discuss in Section~\ref{traffic lights control}.  Consider now a collection of vehicles composed of \(N\) CAVs and \(M\) HDVs. The set of CAVs within the control zone at time $t$ is denoted as $\mathcal{N}(t)=\{1,\dots,N(t)\}$ while the set of HDVs is denoted as $\mathcal{M}(t)=\{N+1,\dots,M(t)\}$.  Each vehicle's state is described by its longitudinal position \(p_i\) and velocity \(v_i\), along with a control input \(u_i\) that represents acceleration. Regarding $u_i$, CAVs follow a predetermined control policy, while HDVs are controlled by humans. These dynamics are described by a double-integrator model,
\begin{equation}  
\begin{aligned} \label{eq:dynamics}
\dot{p}_i = v_i, \\ \quad \dot{v}_i = u_i,
\end{aligned}
\end{equation}
where $i\in\mathcal{N}(t)\cup \mathcal{M}(t).$
As depicted in Fig. \ref{fig:Intersection}, the intersection is controlled by traffic lights whose phases allow vehicles to cross if there is no risk of a lateral collision. To simplify our analysis, we define only four traffic phases (see Fig. \ref{fig:phases}); yet, adjusting the number of phases according to CAVs penetration rate, is the subject of ongoing research. Let us now define the following assumptions.

\begin{itemize}
    \item Assumption 1: Tracking controllers are available at the low-level control layer of CAVs to ensure reference speed tracking and steering for lane keeping. 
    \item Assumption 2: Communication noise and delay between CAVs and traffic lights is negligible.
    \item Assumption 3: Vehicles have already selected their path upon entering the control zone.
    \item Assumption 4: HDVs follow a car-following model and based on this model their future states can be predicted.
\end{itemize}

We impose Assumption 1 to focus explicitly on higher-level decision-making, though deviations could easily be incorporated if needed; see \cite{ChalakiCBF2022}. Assumption 2 can be similarly relaxed by introducing communication noise, as demonstrated in previous work \cite{chalaki2021RobustGP}. Assumption 3 is reasonable given the proximity of the control zone to the intersection. Although studies have considered relaxing this assumption \cite{Malikopoulos2020}; doing so does not affect our analysis and thus we exclude it to focus on our main results. Finally, while Assumption~4 may seem restrictive, analyzing HDV behavior in a signalized environment requires extensive data analysis, which is the subject of ongoing research, and due to space limitations is out of the scope of this work. Nonetheless, in Section \ref{Human_Driven_vehicles_section} we discuss how we can safely handle possible deviations of the actual trajectory from the predicted one.

\section{Control of Connected and Automoated Vehicles} \label{CAVs control}
In this section, we define a control policy for CAVs based on signal timing and preceding CAVs (or trajectory predictions for HDVs), with the aim of increasing throughput and minimizing energy consumption. Next, we introduce the following definitions that will facilitate our exposition. Let $t_i^0$ denote the time at which CAV $i$ enters the control zone, $t_i^{\text{tr}}$ the time at which it crosses the traffic light, and $t_i^f$ the time at which it exits the control zone. Similarly, let $p_i^0$, $p_i^{\text{tr}}$, and $p_i^f$ represent the corresponding positions at these times, respectively. Next, the set of available green time intervals related to the path of vehicle $i$ is denoted by $\mathcal{T}_i^g$. Only for this section, we consider that the duration of the signal phases is known while in Section IV we present a method to adjust the phase duration without interfering with existing CAVs' plans, with the aim of improving throughput. Following, we introduce the safety constraints for CAVs.

\subsubsection{Rear-end constraints}
To prevent any possible rear-end collision between two consecutive vehicles, $i \in \mathcal{N}$ and $k \in (\mathcal{N} \cup \mathcal{M}) \setminus {i}$, where $k$ is the preceding vehicle, we impose the following constraint
\begin{equation}\label{eq:rear-end}
    p_k(t)-p_i(t)\geq \delta_i(t) = \phi v_i(t) + \gamma_{k},
\end{equation}
\noindent where $\delta_i(t)$ is the safe-dependent distance, while $\phi$ and $\gamma_k$ are the reaction time and standstill distance, respectively. We use a different standstill distance per vehicle to allow higher values when the preceding vehicle is an HDV, allowing for more conservative spacing.

\subsubsection{State constraints} Each CAV must respect both minimum and maximum speed limits, as well as minimum and maximum control inputs, defined as
\begin{align}
    &v_{\text{min}}\leq v_i(t) \leq v_{\text{max}}, \label{speed constraint} \\
    &u_{\text{min}}\leq u_i(t) \leq u_{\text{max}}. \label{acceleration constraint} 
\end{align}

\subsubsection{Traffic light constraint} The time each CAV $i$ crosses the traffic light must satisfy traffic light constraints
\begin{align}
    t_i^{\text{tr}}\in\mathcal{T}_i^g. \label{traffic_Constraint}
\end{align}
\subsection{Finding an unconstrained trajectory} \label{finding unconstrained trajectory}
The goal of each CAV $i$ is to minimize its travel time while also minimizing its control input (acceleration). These objectives aim to maximize throughput at the intersection while indirectly reducing energy consumption. However, they are competing objectives, leading each CAV to a problem of balancing the associated trade-off.  To manage the tradeoff between the two objectives in an optimal way, we utilize an approach presented in \cite{Malikopoulos2020}. Namely, we formulate two optimization problems: an upper-level time-optimal problem that identifies the exit time from the control zone for each CAV $i$, which is then passed as an input to a low-level optimal control problem that yields the control input trajectory that minimizes energy consumption. The low-level  optimization problem is formulated as:

\noindent \textbf{Low-level (energy-optimal) control problem:} \label{prb:ocp-1}
\begin{equation}
\begin{aligned}
\label{eq:energy_cost}
&\underset{u_i}{\min} \quad \frac{1}{2} \int_{t^{0}_{i}}^{t_i^f} u^2_i(t) \, \mathrm{d}t, \\
&\text{subject to:}\quad \eqref{eq:dynamics},\quad \eqref{eq:rear-end}, \quad \eqref{speed constraint},\quad \eqref{acceleration constraint}, \quad \eqref{traffic_Constraint}, \quad \\
&\text{given:} \quad p_i (t_i^0) = p_i^0, \,\, v_i (t_i^0) = v_i^0, 
\,\, p_i (t_i^f) = p_i^f,
\end{aligned}
\end{equation}
where $t_i^f$ is computed by the upper-level optimization problem, discussed next. This problem can be solved analytically by utilizing the Hamiltonian analysis as presented by \cite{Malikopoulos2020}. The optimal unconstrained trajectories are 
\begin{equation}\label{eq:optimalTrajectory}
\begin{split}
u_i(t) &= 6 a_i t + 2 b_i, \\
v_i(t) &= 3 a_i t^2 + 2 b_i t + c_i, 
\\
p_i(t) &= a_i t^3 + b_i t^2 + c_i t + d_i,
\end{split}
\end{equation} 
where $a_i, b_i, c_i, d_i$ are constants of integration and can be found using the initial and terminal conditions along the optimal terminal condition $u_i(t_i^f)=0$ as discussed in \cite{chalaki2020experimental}. \\

\noindent \textbf{Upper-Level (time-optimal) problem}: \\ \\
At time $t_i^0$, when CAV $i$ enters the control zone, we define the feasible exit-time range 
as $\mathcal{F}_i(t_i^0) = [\underline{t}_i^f, \overline{t}_i^f]$. The bounds \(\underline{t}_i^f\) and \(\overline{t}_i^f\) are state-dependent and correspond to the earliest and latest exit time times achievable by the unconstrained trajectory \eqref{eq:optimalTrajectory} while satisfying the state and control bounds (see \cite{chalaki2020experimental}, page 27). Then, CAV \(i\) solves an optimization problem to find the minimum exit time \(t_i^f \in \mathcal{F}_i(t_i^0)\) that satisfies all constraints, defined as,
\begin{align}\label{Time_optimal}
&\underset{t_i^f \in \mathcal{F}_i(t_i^0)}{\min} \quad t_i^f  \\
&\text{subject to:} \quad \eqref{eq:rear-end}, \quad \eqref{speed constraint},\quad \eqref{acceleration constraint}, \quad \eqref{traffic_Constraint}, \quad \eqref{eq:optimalTrajectory}, \nonumber \\
&\text{given:} \quad p_i(t_i^0) = p_i^0, \, v_i(t_i^0) = v_i^0, \, p_i(t_i^f) = p_i^f, \, u_i(t_i^f) = 0. \nonumber 
\end{align}
The two optimization problems \eqref{eq:energy_cost}-\eqref{Time_optimal} aim to find an optimal unconstrained trajectory as illustrated in Algorithm 1. After obtaining a solution from Algorithm 1, the final trajectory for CAV $i$ is sent to the smart traffic lights, allowing the following CAVs to access this information and plan their own trajectories accordingly. Also note that CAVs are capable of sharing information with the smart traffic lights regarding HDV trajectories, so other CAV can access it, as well. 
% \begin{remark}
% Note that our optimization framework includes constraint~\eqref{eq:rear-end}, which also accounts for futures states of HDVs. Although we consider a car-following model for HDVs, in Section~\ref{Human_Driven_vehicles_section}, we describe a replanning strategy in case the actual trajectory deviates from the predicted one.
% \end{remark}

\begin{algorithm}[H]
\caption{Finding an unconstrained trajectory}
\begin{algorithmic}[1]
\STATE Initialize $t_i^f \gets \underline{t}_i^f$
\WHILE{$t_i^f \in \mathcal{F}(t_i^0)$}
    \STATE Solve the optimization problem~\eqref{eq:energy_cost}
    \IF{no constraints are violated}
        \STATE Output: solution is~\eqref{eq:optimalTrajectory} based on $t_i^f$
        \STATE \textbf{break}
    \ELSE
        \STATE $t_i^f \gets t_i^f + \Delta t$
    \ENDIF
\ENDWHILE
\end{algorithmic}
\end{algorithm}
\subsection{Case where no feasible solution exists due to traffic light constraints}

When a CAV $i$ follows the procedure described in Algorithm~1, there may be instances where no feasible \textit{unconstrained} trajectory exists, e.g., due to traffic signal timing. To rapidly identify such cases without executing Algorithm 1, we establish a real-time condition that determines whether there exists a subset of unconstrained trajectories for CAV $i$ that respects signal timing. Only if there exists such a subset, CAV $i$ can apply Algorithm 1 to identify an unconstrained trajectory that also respects the rest of the constraints. Let us now formally define this condition.

Consider a CAV~$i$, which enters the control zone at time $t_i^0$. Based on its initial state, the feasible exit time interval from the control zone is $\mathcal{F}_i(t_i^0) = [\underline{t}_i^f, \overline{t}_i^f]$. Then, define as $\underline{p}_i(t)$ and $\overline{p}_i(t)$ the unconstrained position trajectories associated with the exit times $\underline{t}_i^f$ and $\overline{t}_i^f$, respectively. Let $t_i^{c_1}$ and $t_i^{c_2}$ be the times when these trajectories reach the traffic light location $p_i^{tr}$, that is, $\underline{p}_i(t_i^{c_1}) = p_i^{tr}$ and $\overline{p}_i(t_i^{c_2}) = p_i^{tr}$. Then, consider the green interval of the traffic light for this traffic cycle to be given by $\mathcal{T}_i^g = [g_1, g_2]$. To determine whether there is a subset of unconstrained trajectories that allows CAV $i$ to cross during $\mathcal{T}_i^g$, we compute the interval
\begin{align} \label{crossing check}
[t_i^{c_1}, t_i^{c_2}] \cap [g_1, g_2] &= \left[\max\{t_i^{c_1}, g_1\}, \min\{t_i^{c_2}, g_2\}\right] \nonumber \\ 
&=[T_{\text{start}}, T_{\text{end}}].
\end{align}
\noindent Two cases arise:  1) If $T_{\text{start}} \leq T_{\text{end}}$, there exists a subset of feasible unconstrained trajectories that satisfies the traffic light constraint. 2) If $T_{\text{start}} > T_{\text{end}}$, no feasible unconstrained trajectory exists for crossing the traffic light within the green phase. For each of the above cases, CAV $i$ must follow a different control strategy. 

In the first case, CAV $i$ proceeds by following Algorithm~1. In case Algorithm~1 does not return a feasible solution due to rear-end constraints, the CAV enters a standby mode which is defined in Section~\ref{standby mode}.

For the second case, there are two alternatives: 1) CAV $i$ identifies a feasible \textit{constrained} trajectory that activates one or both of the speed and acceleration constraints \eqref{speed constraint}--\eqref{acceleration constraint} over a non-zero time interval while ensuring the vehicle can cross the intersection in compliance with the signal timing and rear-end constraints (discussed in Section \ref{constrained trajectory}) or 2) CAV enters a standby mode defined in Section~\ref{standby mode}.

% directing it to optimally decelerate and come to a stop at the traffic light. As we will discuss in Section III.C, while a CAV is following this standby trajectory, it continues to monitor its state and may identify an instant to switch to a feasible unconstrained trajectory that allows it to cross the intersection without violating any constraints. 

% \begin{remark}
% When defining constrained trajectories, we do not include rear-end constraints 
% because that would require a time-intensive process of merging constrained and 
% unconstrained arcs~\cite{Malikopoulos2020}. Instead, we only consider the 
% final time $t_i^{tr}$ and then check if rear-end constraints are satisfied. 
% \end{remark}

\begin{remark}
When a CAV enters the control zone, it may have access to a disjoint set of green intervals. In such cases, it must sequentially check the previous condition for all intervals before selecting to enter the standby mode.
\end{remark}

In the following subsections, we present the strategy for adopting a constrained trajectory and then the strategy for entering a standby mode when no feasible trajectory exists.

\subsection{Finding a constrained trajectory} \label{constrained trajectory}

\noindent As previously discussed, we prioritize unconstrained trajectories to maintain a balance between energy consumption and travel time. However, if for a CAV $i$ we obtain $T_{\text{start}} > T_{\text{end}}$, indicates that the vehicle cannot cross the intersection within the allowed time interval using an unconstrained trajectory. However, this infeasibility does not necessarily imply that crossing is impossible. The vehicle might still cross in time by following a constrained trajectory, by applying maximum acceleration $u_{\max}$ or (and) reaching maximum speed $v_{\max}$, depending on its initial state.

To identify if a constrained solution can be considered, we first need to identify the earliest time a CAV $i$ can reach the traffic light at $p_i^{tr}$ based on its physical capabilities. Depending on whether CAV $i$ reaches $v_{\max}$ before or after the traffic light under maximum acceleration $u_{\max}$, the minimal achievable arrival time $T_i^{\mathrm{min}}(v_0)$ at $p_i^{tr}$ is given by,
\begin{equation}
T_i^{{\mathrm{min}}}
= 
\begin{cases}
\frac{\sqrt{\,(v_i^0)^2 + 2\,u_{\max}\,p_i^{tr}\,} - v_i^0}{u_{\max}}, 
& p^{v_{\max}} \ge p_i^{tr}, \\[1em]
\frac{v_{\max} - v_i^0}{u_{\max}}
+ 
\frac{p_i^{tr} - p^{v_{\max}}}{\,v_{\max}\,}
&p^{v_{\max}} < p_i^{tr},
\end{cases}
\end{equation}
where $p^{v_{\max}}=\frac{v_{\max}^2 - (v_i^0)^2}{2 u_{\max}}$   is the position at which the vehicle reaches the maximum allowed speed $v_{\max}$, under maximum acceleration $u_{\max}$. As an example, we can see in Fig. \ref{fig:Example1}, at \textit{Instant 1} that although the interval associated with the unconstrained trajectories $[t_i^{c_1}, t_i^{c_2}]$ lies within the red phase, if the vehicle accelerates at its maximum it may be able to reach the green light at $T_i^{{\mathrm{min}}}$, by following a constrained trajectory. 

The construction of a constrained trajectory to achieve a faster arrival time $t_i^{tr}$ at the traffic light is a rapid and straightforward process, detailed in \cite[Section III.B, \textit{Problem} 3]{meng2020eco}. Due to space limitations, we omit it here.

% Finally, if a constrained trajectory cannot be identified either due to rear-end constraints or strict signal timing, then the CAV enters a standby trajectory discussed in the next subsection.

\begin{remark}
Note that when defining constrained trajectories, we do not directly incorporate rear-end constraints 
because that would require a time-intensive and complicated process of merging constrained and 
unconstrained arcs based on the dynamic state evolution of the preceding vehicle. Following, we discuss how to select $t_i^{tr}$ aiming for a constrained trajectory that does not violate rear-end constraints.
\end{remark}

To identify a feasible $t_i^{tr}$, we first determine the interval in which 
$t_i^{tr}$ can lie, namely $[\mathcal{T}_i^{\min}, g_2]$, where $g_2$ is the 
last available crossing time (see Fig.~\ref{fig:Example1}, \emph{Instant 1}). 
If $[\mathcal{T}_i^{\min}, g_2]=~\emptyset$, the CAV enters standby mode since it cannot cross due to its physical capabilities. Otherwise, we check if a preceding vehicle $k$ occupies a subset of 
$[\mathcal{T}_i^{\min}, g_2]$. In this case, that interval reduces to 
$[t_k^{tr} + \epsilon, g_2]$, with $\mathcal{T}_i^{\min}<t_k^{tr} + \epsilon$ 
where $\epsilon$ is the time headway calculated from~\eqref{eq:rear-end}. Let $\mathcal{C}(t_i^0)$ denote the resulting interval, either $[\mathcal{T}_i^{\min}, g_2]$ or $[t_k^{tr} + \epsilon, g_2]$.

Based on $\mathcal{C}(t_i^0)$ we initialize $t_i^{tr} \gets \min(\mathcal{C}(t_i^{tr}))$. Given that $\mathcal{C}(t_i^0)$ is only associated with the crossing time, we also check whether the constrained trajectory associated with $t_i^{tr}$ violates the rear-end constraint \eqref{eq:rear-end} from $p_i^0$ to $p_i^{tr}$. If \eqref{eq:rear-end} is violated, we increase $t_i^{tr}$ by an incremental step $\Delta t$ within $\mathcal{C}(t_i^0)$ and 
repeat this check.   

Finally, we also need to verify whether an unconstrained trajectory from $p_i^{tr}$ to $p_i^{f}$—that is, from the traffic light to the exit of the control zone—is feasible for CAV~$i$ based on the selected time~$t_i^{tr}$. Based on the state of CAV $i$ at $t_i^{tr}$ (which is known upon selecting  $t_i^{tr}$) we determine the feasible exit-time interval from the control zone starting from $t_i^{tr}$, that is,
$\mathcal{F}_i(t_i^{tr}) = [\underline{t}_i^f, \overline{t}_i^f]$ (similarly to $\mathcal{F}_i(t_i^{0})$ discussed above). Then, we verify whether the unconstrained trajectory with the largest exit time 
$\overline{t}_i^f$ satisfies the rear-end safety 
constraint \eqref{eq:rear-end}. If that trajectory is infeasible, all 
trajectories with smaller exit times $t_i^{tr}\in \mathcal{F}_i(t_i^{tr})$ are also infeasible due to higher velocities. In such case, we increase again $t_i^{tr}$ by an incremental step $\Delta t$ within $\mathcal{C}(t_i^0)$ and 
repeat this check. If the new unconstrained trajectory from $p_i^{tr}$ to 
$p_i^f$ is feasible, CAV~$i$ can adopt a constrained trajectory to 
$p_i^{tr}$, ensuring it can safely exit the control zone after crossing. The previous process within $\mathcal{C}(t_i^0)$ can be done in a fraction of a second, given that we have access to the other vehicles' trajectories, from the smart traffic lights.

Finally, if $t_i^{tr}$ is identified through the previous process, then CAV $i$ may use Algorithm~1 to choose 
an optimal unconstrained trajectory from $p_i^{tr}$ to $p_i^f$ with a 
potentially lower feasible exit time $t_i^f$ within 
$\mathcal{F}_i(t_i^{tr})$. In case $t_i^{tr}$ is not identified, CAV $i$ enters the standby mode discussed next.

\begin{figure}
    \centering
    \includegraphics[width=1\linewidth]{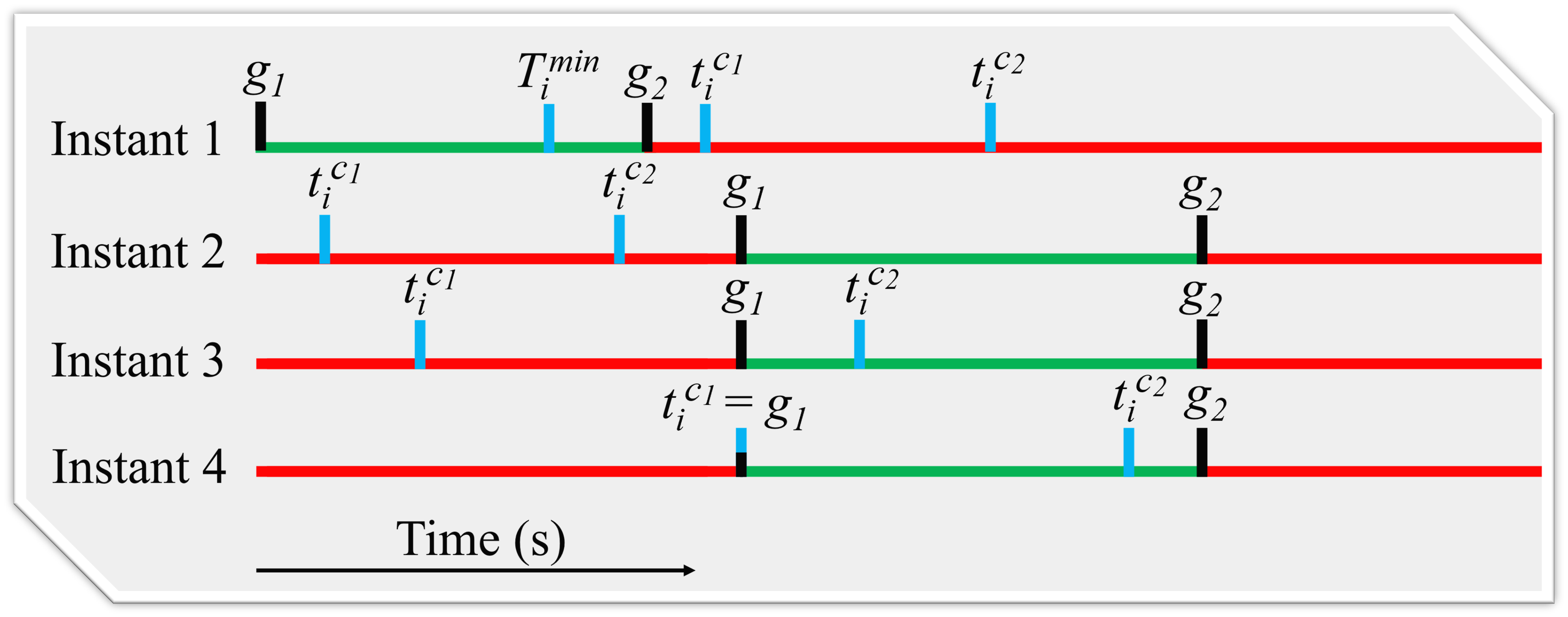}
    \caption{Illustration of different instants with varying crossing intervals and times $\mathcal{T}_i^{\min}$, $t_i^{c_1}$, and $t_i^{c_2}.$}
    \label{fig:Example1}
      \vspace{-12pt}
\end{figure}

\subsection{Standby mode} \label{standby mode}
When the CAV’s physical capabilities and rear-end constraints prevent it from safely crossing the traffic light after evaluating the methods discussed above, the CAV enters a standby mode, awaiting for the next available crossing interval. During the standby mode the CAV follows an optimal unconstrained trajectory to stop at the traffic light. To define this trajectory, we employ the framework presented in \eqref{eq:energy_cost}, resulting in the unconstrained trajectory form given by \eqref{eq:optimalTrajectory}.  To determine the constants of integration, we use the initial position $p_i(t_i^0)=p_i^0$,  initial velocity $v_i(t_i^0)=v_i^0$, terminal position at the traffic light $p_i(t_i^{tr})=p_i^{tr}$ and the terminal velocity $v_i(t_i^{tr})=0$. Finally in order to get the analytical solution form \eqref{eq:optimalTrajectory} we need to identify the stopping time $t_i^{tr}$ denoted for clarity as $t_i^s$. We aim to select the \textit{latest} feasible stopping time $t_i^s$ that allows for an unconstrained trajectory, for two reasons: (1) an earlier stop increases energy consumption due to the trade-off between energy use and travel time, and (2) this approach provides the CAV with additional time in standby mode to potentially identify an unconstrained trajectory to cross $p_i^{tr}$ without stopping. We identify the latest stopping time in the following Theorem. 

\begin{theorem}
For CAV $i$, the latest stopping time  $t_i^s$ to stop at $p_i^{tr}$ under an unconstrained trajectory, is defined as:
\begin{align}
t_i^s &=
\left\{
\begin{array}{ll}
-\frac{2v_i^0}{u_i^c}, & t_b\leq u_i^c,  \\[5pt]
t_b, & t_b> u_i^c,
\end{array}
\right.  
\end{align}
where $t_b=\frac{-v_i^0+\sqrt{(v_i^0)^2-6u_{\min}p_i^{tr}}}{-u_{\min}}$ and $u_i^c=-\frac{2(v_i^0)^2}{3p_i^{tr}}$ .
\end{theorem}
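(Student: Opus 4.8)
The plan is to treat the stopping time $t_i^s$ as the single free parameter of the cubic family \eqref{eq:optimalTrajectory} and to reduce the state/control bounds to scalar threshold conditions from which the two cases emerge. First I would place the origin of time and space at the entry event, so that the four boundary data $p_i(0)=0$, $v_i(0)=v_i^0$, $p_i(t_i^s)=p_i^{tr}$, $v_i(t_i^s)=0$ determine the constants $a_i,b_i,c_i,d_i$ uniquely for each candidate $t_i^s$. Solving this linear system gives $c_i=v_i^0$, $d_i=0$, and closed forms $a_i(t_i^s),b_i(t_i^s)$; substituting into $u_i(t)=6a_it+2b_i$ shows the acceleration is \emph{affine} in $t$, so over $[0,t_i^s]$ its extrema occur only at the endpoints, with $u_i(0)=\tfrac{6p_i^{tr}}{(t_i^s)^2}-\tfrac{4v_i^0}{t_i^s}$ and $u_i(t_i^s)=\tfrac{2v_i^0}{t_i^s}-\tfrac{6p_i^{tr}}{(t_i^s)^2}$.

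Second, I would isolate the two mechanisms that cap the latest admissible $t_i^s$. The profile $v_i(t)$ is a parabola with $v_i(0)=v_i^0>0$ and $v_i(t_i^s)=0$; the physical requirement $v_i\ge 0$ fails as soon as the parabola develops an interior dip below zero, and I would show this first occurs exactly when the terminal slope turns nonnegative, i.e. when $u_i(t_i^s)=0$. Setting $u_i(t_i^s)=0$ yields $t_i^s=\tfrac{3p_i^{tr}}{v_i^0}$, at which the initial (and, in this regime, peak) deceleration equals $u_i^c=-\tfrac{2(v_i^0)^2}{3p_i^{tr}}$; since $\tfrac{3p_i^{tr}}{v_i^0}=-\tfrac{2v_i^0}{u_i^c}$, this is the first branch of the claim. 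The competing mechanism is the deceleration floor $u_i(t)\ge u_{\min}$; because $u_i$ is affine, this reduces to an endpoint inequality, and the binding equation $u_i(\cdot)=u_{\min}$ is quadratic in $t_i^s$, whose admissible positive root is precisely $t_b=\tfrac{-v_i^0+\sqrt{(v_i^0)^2-6u_{\min}p_i^{tr}}}{-u_{\min}}$.

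Third, I would assemble the case split by comparing the two caps. When the deceleration demanded by the velocity-limited (smooth) stop is within the actuator limit, that stop is admissible and latest, giving $t_i^s=-\tfrac{2v_i^0}{u_i^c}$; otherwise the deceleration floor is the tighter constraint and the latest admissible stop is $t_i^s=t_b$. I would phrase the selecting threshold as a comparison between the required smooth-stop deceleration $u_i^c$ and the limit $u_{\min}$ (equivalently, between the two candidate times), which is what the stated condition encodes. It then remains to verify that the claimed $t_i^s$ genuinely lies in the feasible window: on the $t_b$ branch one checks that at $t_b$ the opposite endpoint of $u_i$ and the entire velocity profile respect their bounds, and that no larger $t_i^s$ is simultaneously admissible.

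The delicate part is exactly this last bookkeeping. Because the two endpoint values of the affine control $u_i$ exchange roles (the minimizing endpoint switches as $t_i^s$ crosses $\tfrac{2p_i^{tr}}{v_i^0}$), I expect the main obstacle to be proving that the candidate returned in each case is the \emph{latest} feasible stop rather than merely \emph{a} feasible one — that is, establishing the correct monotone dependence of each binding quantity on $t_i^s$ and confirming that the threshold in the statement separates the two regimes exactly. By contrast, the algebra (solving the two quadratics and checking $-\tfrac{2v_i^0}{u_i^c}=\tfrac{3p_i^{tr}}{v_i^0}$) is routine and reduces to direct substitution.
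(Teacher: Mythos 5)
Your first two steps are essentially the paper's own argument in a cleaner parametrization. The paper eliminates $a_i$ from the boundary conditions and obtains a quadratic in the stopping time whose free parameter is the initial acceleration $u_i^0=2b_i$; its first branch comes from requiring a real (double) root, i.e.\ vanishing discriminant, which is exactly your tangency condition $v_i(t_i^s)=\dot v_i(t_i^s)=0$, equivalently $u_i(t_i^s)=0$, giving $t_i^s=3p_i^{tr}/v_i^0=-2v_i^0/u_i^c$ with initial acceleration $u_i^c$. Your velocity-nonnegativity derivation of this cap is correct, and more explicit than the paper's reading of complex roots as ``the vehicle stops short.'' Likewise, both you and the paper obtain $t_b$ as the stopping time at which the \emph{terminal} acceleration equals $u_{\min}$, and your reading of the (dimensionally garbled) case condition as a comparison between $u_i^c$ and $u_{\min}$ is the sensible one.

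The genuine gap is the step you flag and postpone: showing that in the regime $u_{\min}>u_i^c$ the latest admissible stop is $t_b$. That step fails, for exactly the endpoint-exchange reason you identify. The equation defining $t_b$, namely $u_i(t_i^s)=u_{\min}$, is the binding condition only while $a_i<0$, i.e.\ for $t_i^s<2p_i^{tr}/v_i^0$. But in this regime the latest feasible stops lie beyond $2p_i^{tr}/v_i^0$, where $a_i>0$ and the minimum of the affine control sits at $t=0$; the constraint there is $u_i(0)=\tfrac{6p_i^{tr}}{(t_i^s)^2}-\tfrac{4v_i^0}{t_i^s}\ge u_{\min}$, and since $u_i(0)$ is decreasing in $t_i^s$ on $\bigl(0,3p_i^{tr}/v_i^0\bigr)$, the latest feasible stop is the smaller positive root of $u_{\min}(t_i^s)^2+4v_i^0\,t_i^s-6p_i^{tr}=0$, which is strictly larger than $t_b$. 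Numerically: $v_i^0=10$ m/s, $p_i^{tr}=100$ m, $u_{\min}=-0.6$ m/s$^2$ give $u_i^c=-2/3<u_{\min}$ and $t_b\approx 19.1$ s, yet the stop at $t_i^s\approx 22.8$ s (where $u_i(0)=u_{\min}$, $u_i$ is increasing, and $v_i\ge 0$ throughout) is feasible. So $t_b$ is \emph{a} feasible stop but not the \emph{latest} one, and the second branch cannot be proved as written; it should be replaced by the root of $u_i(0)=u_{\min}$ (with the further caveat that if $u_{\min}>-(v_i^0)^2/(2p_i^{tr})$ no unconstrained stop exists at all). For what it is worth, the paper's proof has the same blind spot: it performs the $u_{\min}$ check only ``considering $u(t)$ decreasing'' and relegates the increasing case to a remark asserting it does not arise for realistic parameters. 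Your instinct about where the proof is delicate is exactly right, but carrying out the deferred bookkeeping refutes that branch rather than confirming it.
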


\begin{proof}
Without loss of generality, consider initial conditions $p_i(t_i^0)=0$, $v_i(t_i^0)=v_i^0$. The final conditions at the traffic light are $p_i(t_i^{tr})=p_i^{tr}$ and $v_i(t_i^{tr})=0$, where the terminal time $t_i^{tr}$ is a free variable. From \eqref{eq:optimalTrajectory}, we obtain:
\begin{align}
&a_i (t_i^{tr})^3 + b_i (t_i^{tr})^2 + v_i^0 t_i^{tr} - p_i^{tr} = 0, \label{first}\\
&3 a_i (t_i^{tr})^2 + 2 b_i t_i^{tr} + v_i^0 = 0. \label{second}
\end{align}
From \eqref{second}, we solve for $a_i$ and obtain $a_i = \frac{-2 b_i t_i^{tr}-v_i^0}{3 (t_i^{tr})^2}$. Thus, substituting $a_i$ into \eqref{first} yields $\frac{-2 b_i t_i^{tr}-v_i^0}{3 (t_i^{tr})^2}(t_i^{tr})^3 + b_i (t_i^{tr})^2 + v_i^0 t_i^{tr} - p_i^{tr}$. Simplifying terms yields
\begin{align}
%\frac{-2 b_i t_i^{tr}-v_i^0}{3}(t_i^{tr}) + b_i (t_i^{tr})^2 + v_i^0 t_i^{tr} - p_i^{tr} &= 0 \nonumber \\
%-\frac{2}{3} b_i(t_i^{tr})^2 -\frac{v_i^0}{3}t_i^{tr} + b_i (t_i^{tr})^2 + v_i^0 t_i^{tr} - p_i^{tr} &= 0\nonumber \\
\frac{1}{3} b_i(t_i^{tr})^2 +\frac{2}{3}v_i^0t_i^{tr} - p_i^{tr} &= 0. \label{third}
\end{align}

\noindent Let the initial acceleration be denoted by $u_i^0(t_i^0) = u_i^0$. From equation~\eqref{eq:optimalTrajectory}, we obtain
$u_i^0 = 2b_i \Rightarrow b_i = \frac{u_i^0}{2}.$
Substituting this into equation~\eqref{third}, yields the following quadratic equation in $t_i^{tr}$:
\begin{align}
\frac{u_i^0}{6}(t_i^{tr})^2 + \frac{2v_i^0}{3}t_i^{tr} - p_i^{tr} = 0. \label{quadratic}
\end{align}

For equation~\eqref{quadratic} to admit real solutions, the discriminant must be non-negative. The discriminant $\Delta$ of~\eqref{quadratic} is given by: $\Delta = \left(\frac{2v_i^0}{3}\right)^2 + \frac{2u_i^0}{6}p_i^{tr} = \frac{4(v_i^0)^2 + 6u_i^0p_i^{tr}}{9}.$ Setting $\Delta=0$  allows us to find the critical initial acceleration $u_i^c$. Thus we get $\Delta = 0 \Rightarrow  4(v_i^0)^2 + 6u_i^c p_i^{tr} =~0 \Rightarrow u_i^c = -\frac{2(v_i^0)^2}{3p_i^{tr}}.$ Then for $\Delta = 0$ the solution to \eqref{quadratic} is trivial and equal to
\[
t_i^s = -\frac{2v_i^0}{u_i^c}.
\]
\noindent If $u_i^0 < u_i^c$, then $\Delta<0$, leads to complex roots, which implies the vehicle stops before reaching the traffic light. 

% To guarantee $u_{\min}$ is satisfied, we need to consider the sign of $a_i$, that is the slope of acceleration. Given \eqref{eq:optimalTrajectory}, the control trajectory is linear. So, for $a_i>0$ we need to to guarantee that $u_{\min}$ is not activated at $t_i^0$ while for $a_i<0$ we need to guarantee that $u_{\min}$ is not activated at $t_i^s$. Given the nature of our dyn

% Given \eqref{third}, we can solve for $b_i$ and substitute it to $a_i = \frac{-2 b_i t_i^{tr}-v_i^0}{3 (t_i^{tr})^2}$. Then we can obtain a relation that connects $a_i$ with $p_i^{tr},v_i^0$ and $t_i^{tr}$. By applying some algebra, the sign of $a_i$ is determined based on the sign of $-6p_i^{tr}+4v_i^0t_i^{tr}-v_i^0$. Given that $p_i^{tr}$ and $v_i^0$ are known, the sign of $a_i$ depends only on $t_i^{tr}$.

Next, to guarantee that $u_{\min}$ is satisfied, we need to know the sign of $a_i$. Let us consider first that $u(t)$ is decreasing. Then, $a_i<~0$. Thus, it is sufficient to prove that $u_{\min}$ is not active at $t_i^s$. Hence, we aim to identify the stopping time $t_i^s$ that drives CAV $i$ to activate $u_{\min}$ at $t_i^s$. From \eqref{eq:optimalTrajectory} we obtain $6a_i(t_i^{tr})+2b_i=u_{\min}$ and we know $a_i=\frac{-2b_i\,t_i^{tr}-v_i^0}{3(t_i^{tr})^2}$. Then we solve for the parameter $b_i$. Substituting $b_i$ into \eqref{third} leads to a quadratic equation in $t_i^{tr}$. The positive root is given by $t_b=\frac{-v_i^0+\sqrt{(v_i^0)^2-6u_{\min}p_i^{tr}}}{-u_{\min}}$, which represents the critical time that makes $u_{\min}$ be activated at $t_i^s$. Thus, we need $t_i^s\geq t_b$. Otherwise $t_i^s=t_i^b$. That completes the proof.
\end{proof}

\begin{remark}
A similar argument applies for increasing $u_i(t)$. Then, $u_{\min}$ is satisfied if $u_{\min}$ is not activated at $t_i^0$. However, even with confined distances $p_i^{tr}$ (e.g., $p_i^{tr}=40$ m) and unrealistic initial speeds for urban environments (e.g., $v_i^0=25$ m/s), the critical acceleration $u_i^c = -\frac{2(v_i^0)^2}{3p_i^{tr}} = -5.21$ m/s\textsuperscript{2} yields realistic decelerations; so the proof is omitted.
\end{remark}

Given that we defined the stopping time $t_i^s$, we can now trivially determine an unconstrained standby trajectory for each CAV \( i \) by applying \eqref{eq:energy_cost}-\eqref{eq:optimalTrajectory}. Note that if other preceding vehicles in the same lane are also in standby mode, the final position \( p_i^{tr} \) should be adjusted to account for the preceding vehicle by incorporating the standstill distance.

\subsection{Event-triggered re-planning} \label{event-triggered replanning}
During the standby trajectory, the CAV continuously updates in real time the values $T_{\text{start}}(t)$ and $T_{\text{end}}(t)$ as defined in \eqref{crossing check}, based on its state at time $t$. Thus if the following holds
\begin{equation}
 [T_{\text{start}}(t), T_{\text{end}}(t)] \neq \emptyset\quad t\;\in [t_i^0, t_i^s],  
\end{equation}
for $t>t_i^0$ the CAV can apply Algorithm 1 to identify an unconstrained trajectory to cross the traffic light and exit the standby mode. For instance, consider a CAV arriving at the control zone at time \(t_i^0\) with an initial crossing interval \([t_i^{c_1}, t_i^{c_2}]\) (Fig.~\ref{fig:Example1}, Instant 2). Because it cannot cross immediately, it follows a standby trajectory. During its deceleration phase (Instant 3), the CAV updates its interval such that $g_1 < t_i^{c_2}$ resulting in $[T_{\text{start}}(t), T_{\text{end}}(t)] \neq \emptyset$, enabling an unconstrained trajectory that passes through the green phase.

Finally, note that a CAV exiting the standby mode upon identifying an unconstrained trajectory may still need to replan at a subsequent stage where $t_i^{c_1}$ satisfies $ t_i^{c_1} \ge g_1$, to adopt an unconstrained trajectory, without getting influenced by the signal constraint. For instance, in Fig.~\ref{fig:Example1} (Instant 3), although \( t_i^{c_2} > g_1 \) allows for an unconstrained trajectory, it still accounts for the red phase, resulting in a slower exit \( t_i^f \). Once \( t_i^{c_1} = g_1 \) (Instant 4), the traffic light no longer imposes restrictions, and replanning can yield a faster exit time $t_i^f.$

% \begin{algorithm}[H]
% \caption{Trajectory Selection Upon CAV Entry}
% \begin{algorithmic}[1]
% \STATE \textbf{Step 1:} Upon CAV entry into the control zone, attempt to identify an unconstrained trajectory to cross.
% \IF{an unconstrained trajectory is found}
%     \STATE Assign it to the CAV and exit.
% \ELSE
%     \STATE \textbf{Step 2:} Attempt to identify a constrained trajectory to cross with respect to the signal timing. 
%     \IF{a constrained trajectory is found}
%         \STATE Assign it to the CAV and exit.
%     \ELSE
%         \STATE \textbf{Step 3:} Enter standby mode.
%     \ENDIF
% \ENDIF
% \end{algorithmic}
% \end{algorithm}
\vspace{-3pt}
\section{Human driver model} 
\vspace{-3pt}
\label{Human_Driven_vehicles_section}
While various car-following models have been proposed and substantial work exists on human driver prediction, to the best of our knowledge, no existing work accurately predicts human driver intentions at signalized intersections with adaptive traffic signals under mixed traffic conditions. In previous work \cite{le2024stochastic}, we employed Bayesian linear regression to efficiently learn human behavior from online data and predict HDV trajectories in real time in merging scenarios. However, in signalized intersections with adaptive traffic signals and CAVs, driver behavior prediction remains challenging due to lack of available data. As discussed in Section \ref{modeling framework}, this is the focus of ongoing research utilizing Virtual Reality to aggregate sufficient data under such conditions.
 
Here, we adopt a simple yet easily extendable model (from the perspective of parameterization) to account for future HDVs trajectories, the Intelligent Driver Model (IDM). The acceleration $u_i$ of a vehicle based on IDM is given by:
\begin{equation} \label{IDM}
u_i(t) = u_{\max} \left[ 1 - \left( \frac{v_i(t)}{v_{des}} \right)^{\delta} - \left( \frac{s^*(v_i(t), \Delta v)}{s_i(t)} \right)^2 \right],
\end{equation}
where, $v(t)$ is the current speed of the vehicle, $v_{des}$ is the desired speed, $u_{\max}$ is the maximum acceleration, $\delta$ is a model parameter, $s(t)$ is the gap to the leading vehicle and $\Delta v = v(t) - v_{\text{lead}}(t)$ is the speed difference between the vehicle and the vehicle ahead. The desired minimum gap \( s^*(v_i(t), \Delta v) \) is given by: $s^*(v_i(t), \Delta v) = \gamma_i + v_i(t) \tau + \frac{v_i(t) \Delta v}{2 \sqrt{u_{\max} \beta}}$ where $\gamma_i$ is the standstill distance, $\tau$ is the desired time headway and $\beta$ is the comfortable deceleration.

To incorporate traffic lights into the IDM, we introduce a stationary virtual vehicle at the position of the red light, allowing the HDV to adjust its acceleration based on the distance to the traffic light considering $\Delta v=0$. When both surrounding traffic and red light are present, the IDM accordingly selects the lower acceleration from the red traffic light and the preceding vehicle.

\begin{remark}
Given that we use the IDM model in this paper, predicting the trajectories of HDVs is straightforward and can be computed in milliseconds using efficient ordinary differential equation solvers while considering the evolution of the whole system. Interested readers can experiment with our code provided on the paper's website (see Section \ref{simulations results}). Ongoing research focuses on data-driven methods for HDV trajectory prediction under uncertainty. 
Then, whenever an HDV’s actual path deviates beyond a defined threshold, a replanning mechanism is triggered, 
allowing the following CAVs to adjust their trajectories accordingly.
\end{remark}

% \begin{remark}
% The process described above allows us to determine an optimal unconstrained trajectory for each CAV $i$ without the computationally expensive process of piecing together constrained and unconstrained arcs. However, in congested environments, there may be cases where an unconstrained trajectory is not feasible and the need of piecing together arcs is inevitable if we want to guarantee an optimal solution.
% \end{remark}

\section{Traffic Lights Control} \label{traffic lights control}
In this section, we present our traffic light signal policy. Let $\mathcal{L} = \{1, \dots, |\mathcal{L}|\}$ denote the set of traffic lights, and $\mathcal{P} = \{1, \dots, |\mathcal{P}|\}$ the set of traffic phases.  For each traffic light $l \in \mathcal{L}$, we define $\mathcal{Z}_l$ as the set of paths included in $\mathcal{G}$ controlled by that light, considering that the same light can govern two different paths; see Fig. \ref{fig:Intersection}. Similarly, for each phase $p \in \mathcal{P}$, let $\mathcal{L}_p$ represent the set of traffic lights that are green during that phase. Define the duration of phase $p\in\mathcal{P}$ as $d_p$ and let $T_{\text{cycle}}$ denote the total cycle duration, defined as the sum of the durations of the four phases, namely $T_{\text{cycle}}=\sum_{p \in \mathcal{P}} d_p$ . Let us now define the \textit{pressure} of phase $p$, denoted by $P_p$, as follows:
\[
P_p = \sum_{l \in \mathcal{L}_p} \sum_{z \in \mathcal{Z}_l} n_z \quad p\in \mathcal{P},
\]
where $n_z$ denotes the number of CAVs in \textit{standby mode}, along with the number of HDVs that are stopped and are predicted to stop on path $z$ whose traffic light is associated with phase $p$. This metric evaluates the pressure by exclusively considering the vehicles contributing congestion.

Our final goal is to minimize the pressure of each phase, without influencing the trajectories of the CAVs that have already planned their trajectories. To achieve that we define an update policy that determines the signal timing of the subsequent traffic cycle. We define as $T_{\text{update}}$ the time when the determination takes place at the current cycle while at the same time the signal timing for the next cycle is broadcasted to CAVs. This design ensures that vehicles planning to cross during the current cycle are not affected by the update, as the update takes effect only in the following cycle. Also, it guarantees that the plan for the next cycle is known in advance, allowing CAVs in standby mode to transition to a crossing trajectory.

We now develop the framework that updates the phase durations of every upcoming traffic cycle. Let us define the vector $q = [P_1, \dots, P_{|\mathcal{P}|}]$ that contains the pressures of each phase at $T_{\text{update}}$. As a first step, we sort $q$ in descending order. Thus, let $\pi: \{1,\dots,|\mathcal{P}|\} \rightarrow \mathcal{P}$ be a permutation such that
\[
P_{\pi(1)} \ge P_{\pi(2)} \ge \dots \ge P_{\pi(|\mathcal{P}|)}, \quad \pi(\cdot
) \in \mathcal{P}.
\]
Based on this ordering, phase $\pi(1)$ will be the first phase in the next cycle, $\pi(2)$ the second, and so on. In order to guarantee that each phase is processed in every traffic cycle, we enforce a minimum duration per phase denoted as $T_{\text{min}}$. The remaining time is then calculated as $T_{\text{remain}} = T_{\text{cycle}} - 4T_{\text{min}}$.
We distribute $T_{\text{remain}}$ proportionally as \begin{align}
  T_{\pi(p),(\text{prop})} = \frac{P_{\pi(p)}}{\sum_{p=1}^{|\mathcal{P}|} P_{\pi(p)}}\, T_{\text{remain}}, \quad p \in \mathcal{P}.  
\end{align}
Thus, the final durations of the next cycle are defined as
\begin{align}  
d_p = T_{\text{min}} +  T_{\pi(p),(\text{prop})}\quad p\in\mathcal{P}.
\end{align}

% \begin{figure}[ht]
%   \centering
%   \includegraphics[width=0.4\textwidth]{}
%   \caption{Multi-sigmoid function with three transitions.}
%   \label{fig:sigmoid_influence}
%   \vspace{-12pt}
% \end{figure}

\section{Simulation Results} \label{simulations results}
To validate our framework, we conducted simulations in Matlab. We selected an intersection and corresponding traffic phases as in Fig. \ref{fig:Intersection}, with a control zone of 300m range with the following parameters: $v_{max}=20$ m/s, $v_{min}=0$ m/s, $u_{\max}=5$m/$\text{s}^2$, $u_{\min}=-5$ m/$\text{s}^2$, $\beta=-2$m/$\text{s}^2$, $\tau=1.5$ s, $v_{des}=15$m/s, $T_{\text{update}}=\frac{T_{\text{cycle}}}{2}$.

Under a random seed of initial conditions, we present in Fig.~\ref{fig:result3}, the trajectories of both CAVs and HDVs over a time horizon of 80 s, with arbitrary signal timing for the first traffic cycle and $T_\text{cycle}=40$ s. 
We observe that the first vehicle (CAV) entering the control zone at \( t = 2\,\mathrm{s} \) is unable to identify an unconstrained trajectory. Consequently, it attempts and manages to identify a constrained trajectory.
Subsequently, two HDVs come to a complete stop, followed by four CAVs entering the standby mode due to their inability to identify either an unconstrained or a constrained trajectory. 
At \( t = 20\,\mathrm{s} \), the smart traffic light updates the signal timing for the next cycle, setting the vehicles' crossing time to begin at \( t = 46\,\mathrm{s} \). While approaching in standby mode, the CAVs eventually compute unconstrained trajectories that allow them to cross, as we discussed in Section \ref{event-triggered replanning}.

Similarly, in Fig.~\ref{fig:result2}, we observe a case where the first vehicle (CAV) fails to identify either a constrained or unconstrained trajectory and enters standby mode. Also the replanning instances are highlighted in the same Figure. Notably, at \( t = 30\,\mathrm{s} \), the CAV identifies an unconstrained trajectory using the event-triggered mechanism from Section~\ref{event-triggered replanning}. Then, just before crossing, it identifies a faster unconstrained trajectory with a smaller \( t_i^f \). Fig. \ref{fig:result2} also illustrates the importance of a CAV leading an HDV. Especially we see that the CAV's unconstrained trajectory indirectly guides the HDV and prevents it from making a hard stop (dotted trajectory) at the traffic light, avoiding unnecessary energy waste.

The same behaviors are consistently observed in Figs.~\ref{fig:result8} and \ref{fig:result9}. 
In Fig.~\ref{fig:result8}, we highlight the moment when a CAV transitions from standby mode to an unconstrained trajectory. 
In Fig.~\ref{fig:result9}, we examine two different simulations with identical initial conditions—one consisting of CAVs and the other of HDVs. 
The difference in their resulting trajectories showcases that the standby mode enables CAVs to execute smoother decelerations, thereby avoiding hard stops and reducing unnecessary energy consumption.

Finally, to evaluate the effectiveness of the proposed traffic control policy, we conducted a qualitative study assessing the average time required for vehicles to exit the control zone. 
This assessment was performed for 200 vehicles, under identical initial conditions, using four different control policies and three different CAV penetration rates. 
The results, summarized in Table~\ref{tab:my_table}, highlight several important trends.

One key observation is that the cycle time has a substantial impact on system performance, and its selection must account for the level of CAV penetration. 
Specifically, shorter cycle times—which correspond to more frequent updates of traffic signal phases—tend to favor higher CAV penetration rates. 
For example, with a cycle time of \( T_{\text{cycle}} = 20\,\mathrm{s} \) and a CAV penetration rate of 70\%, the average exit time was 47.87 s. 
In contrast, with the same penetration rate but a longer cycle time of \( T_{\text{cycle}} = 40\,\mathrm{s} \), the average exit time increased to 67.26 s. 
This suggests that infrequent updates can hinder CAV performance.
A similar pattern is observed for a 50\% penetration rate. 
Notably, at 0\% penetration (i.e., only HDVs), the shortest cycle time \( T_{\text{cycle}} = 20\,\mathrm{s} \) resulted in significantly higher average times. 
This can be attributed to the inability of HDVs to anticipate future signal phases, leading to inefficient, one-by-one crossings due to human reaction delays. 
Under 0\% CAVs, the best performance was achieved with \( T_{\text{cycle}} = 30\,\mathrm{s} \), indicating that HDVs benefited from longer cycle durations, which allowed more time to react. It is noteworthy, however, that for \( T_{\text{cycle}} = 40\,\mathrm{s} \), under fixed signal timing, HDVs obtained significantly lower average exit time, whereas the adaptive policy yielded a higher average of 70.35 seconds. Thus, we conclude that at 0\% CAVs penetration, the choice of \( T_{\text{cycle}} \) is substantially sensitive to the performance of the system under this traffic control policy. 
Finally, it is noticeable that when \( T_{\text{cycle}} = 40\,\mathrm{s} \) under adaptive cycles, all vehicles experience their worst performance, suggesting that infrequent updates are inadequate for minimizing the average exit time.

Additional results, videos, and code are available at: \href{https://sites.google.com/cornell.edu/mixed-traffic-inter}{https://sites.google.com/cornell.edu/mixed-traffic-inter}.

\begin{figure*}[ht]
    \centering
    \begin{subfigure}[b]{0.66\linewidth}
        \includegraphics[width=\linewidth]{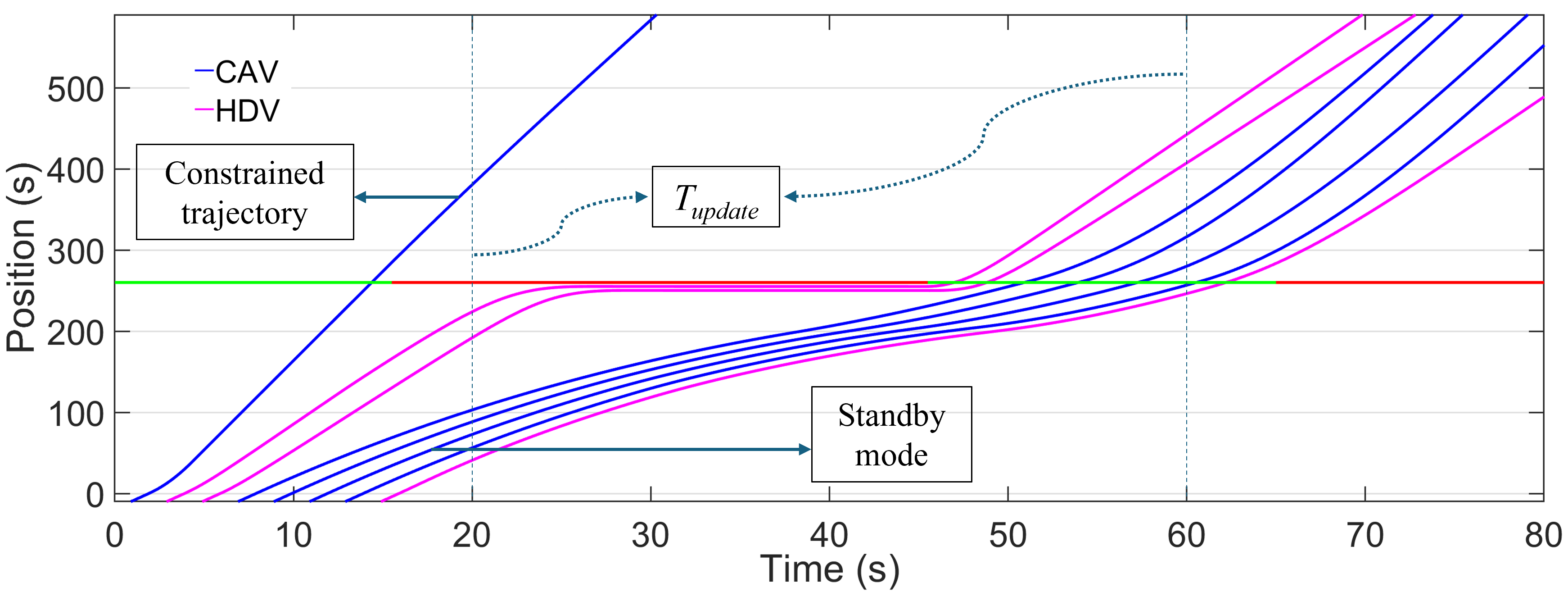}
        \vspace{-16pt}
        \caption{An example where CAVs obtain both constrained and unconstrained trajectories.}
        \label{fig:result3}
    \end{subfigure}
    \hfill
    \begin{subfigure}[b]{0.33\linewidth}
        \includegraphics[width=\linewidth]{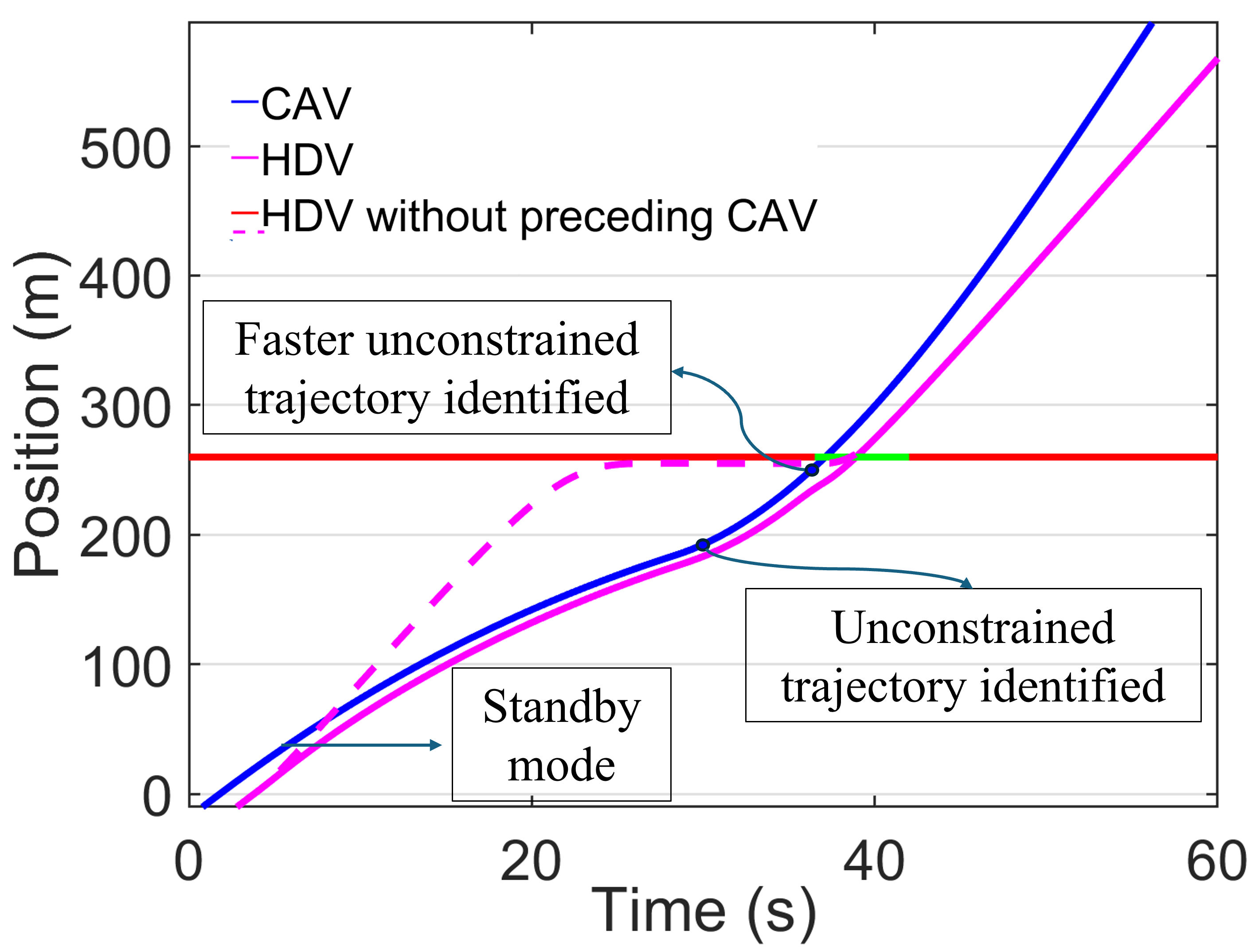}
                \vspace{-16pt}
        \caption{HDV follows CAV in standby mode.}
        \label{fig:result2}
    \end{subfigure}
    \caption{CAVs and HDVs crossing the traffic light under different traffic cycles.}
    \label{fig:overall}
        \vspace{-14pt}
\end{figure*}
\begin{figure}[ht]
    \centering
    \begin{subfigure}[b]{0.49\linewidth}
        \includegraphics[width=\linewidth]{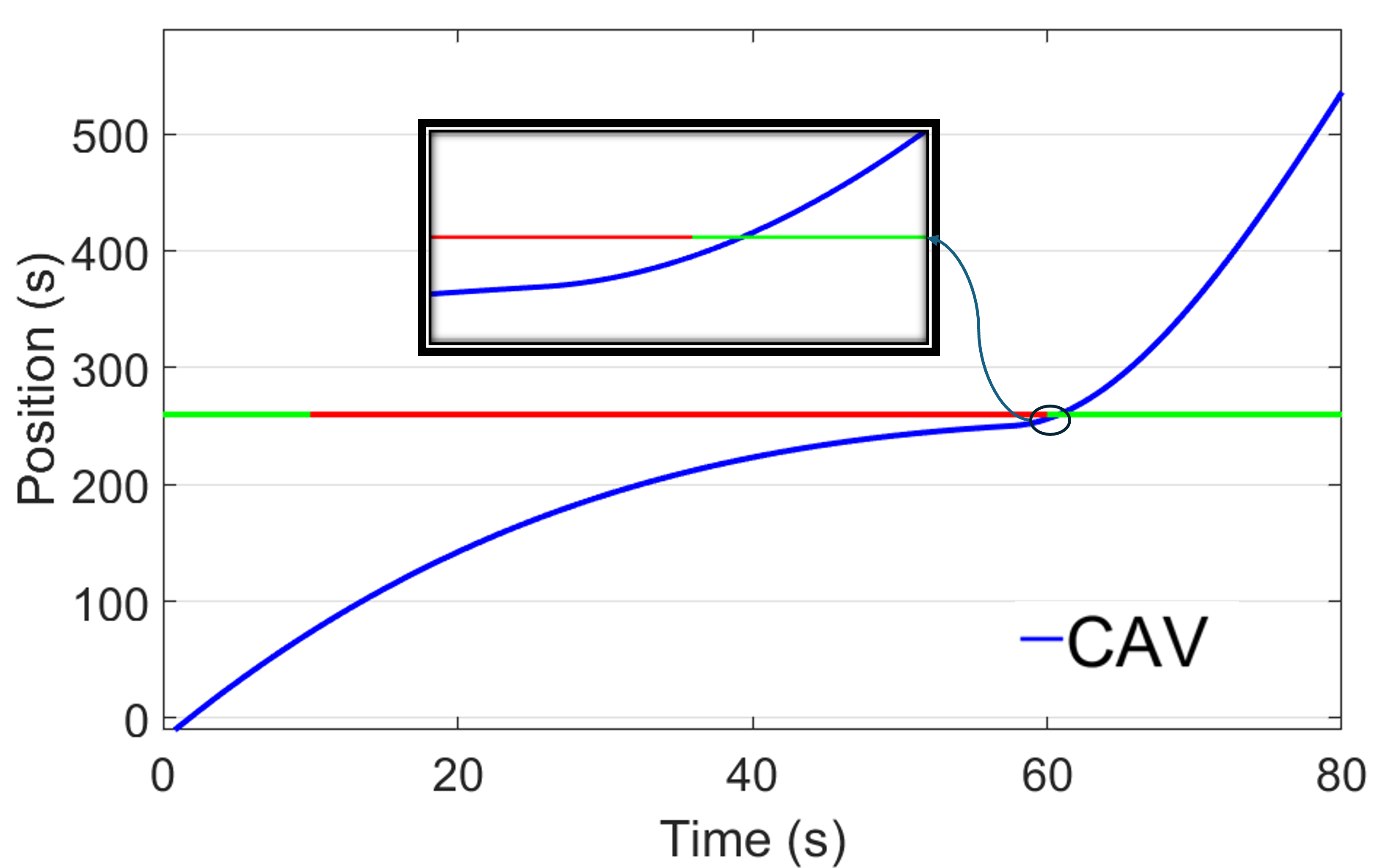}
        \caption{Standby mode and replanning instance obtained by a CAV.}
        \label{fig:result8}
    \end{subfigure}
    \hfill
    \begin{subfigure}[b]{0.49\linewidth}
        \includegraphics[width=\linewidth]{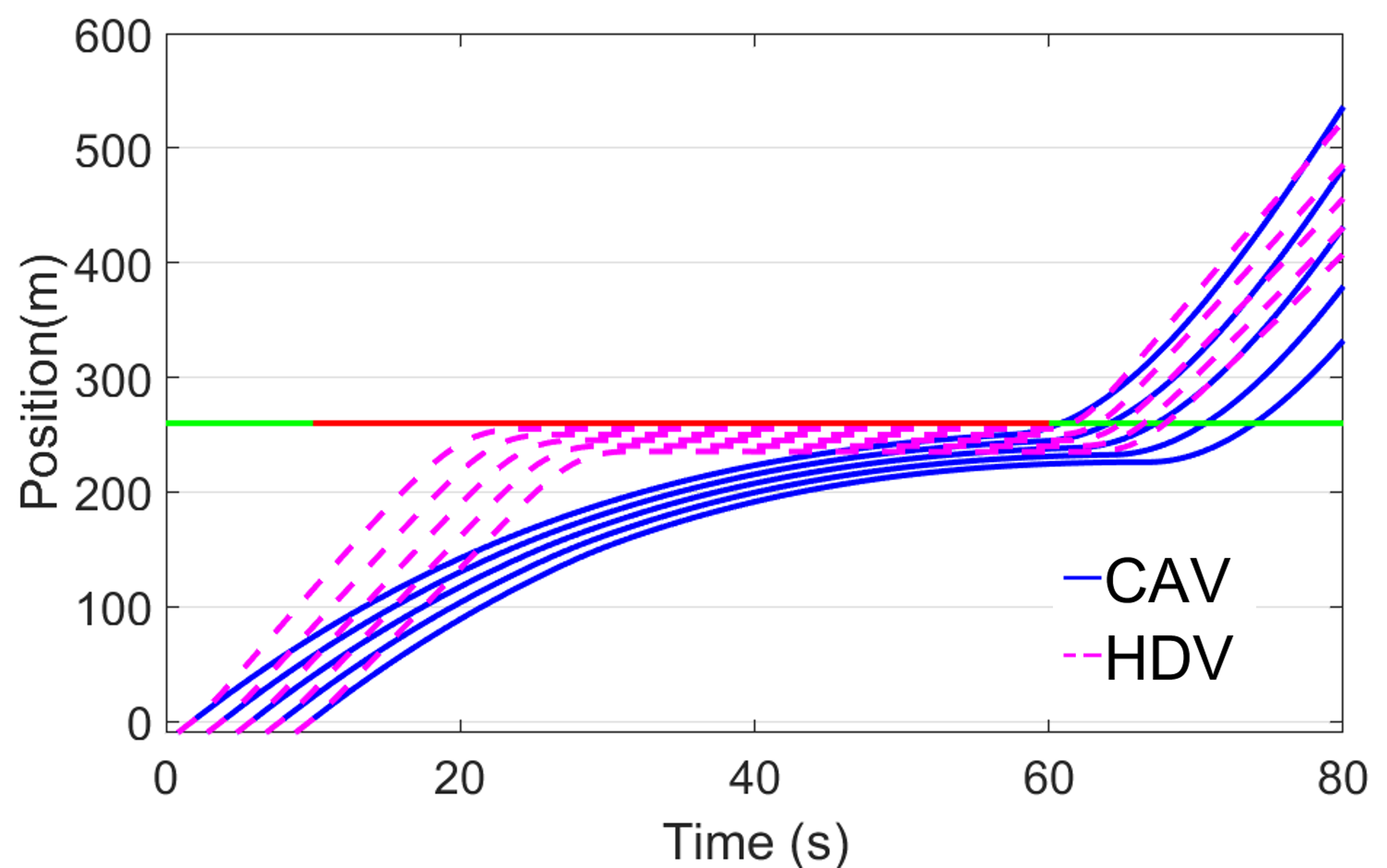}
        \caption{Difference in trajectories between CAVs and HDVs.}
        \label{fig:result9}
    \end{subfigure}
    \caption{Snapshots from different scenarios.}
    \label{fig:overall}
        \vspace{-22pt}
\end{figure}
\begin{table}[h]
\vspace{-6pt}
\centering
\begin{tabular}{|c|c|c|c|}
\hline
Signal Policy & 0$\%$ CAVs & 50$\%$ CAVs & 70$\%$ CAVs    \\
\hline
FC ($T_{\text{cycle}}=40s$) & 58.46s & 67.26s & 61.05s \\
\hline
AC ($T_{\text{cycle}}=20s$)  & 70.3s & 55.46s & 47.87s \\
\hline
AC ($T_{\text{cycle}}=30s$) & 56.77s & 59.13 s & 57.08s \\
\hline
AC ($T_{\text{cycle}}=40s$) & 70.35s & 67.36s & 67.26s \\
\hline
\end{tabular}
\caption{Average travel time per vehicle to exit the control. FT = fixed cycle, AT = adaptive cycle.}
\vspace{-15pt}
\label{tab:my_table}
\end{table}
\section{Concluding remarks} 
\vspace{-5pt}
In this study, we have developed a comprehensive control framework for signalized intersections operating within mixed traffic environments, which optimally generates trajectories for CAVs while dynamically adjusting traffic signal timings. We have introduced a mechanism that facilitates the updating of signal timings without disrupting pre-planned CAV trajectories, concurrently demonstrating that our controller can effectively influence HDVs towards adopting more energy-efficient driving behaviors. We validated the framework through simulations in MATLAB, which underscored the significant impact of traffic cycle timing across varying CAV penetration rates.  Ongoing research includes the online prediction of HDV behavior within this framework. A potential direction for future research should consider the extension of the proposed approach to accommodate scenarios involving multiple traffic phases.
\label{conlcuding remarks}

\linespread{0.99}\selectfont
\vspace{-3pt}
\bibliographystyle{ieeetr}
 \vspace{-5pt}
\bibliography{bibliography.bib,IDS_Publications_03232025,Filippos}

\end{document}